%% file: icassp_draft_short.tex
\documentclass{article}
\usepackage{spconf,amsmath,graphicx,hyperref}

\usepackage[noadjust]{cite}
\usepackage{amsmath}
\usepackage{amssymb}
\usepackage{amsthm}
\usepackage{mathtools}
\usepackage{bbm}
\usepackage{algorithm}
\usepackage[noend]{algpseudocode}
\usepackage{bm}
\usepackage{comment}
\usepackage{tabularx,booktabs}
\usepackage[capitalize]{cleveref}
\usepackage[caption=false]{subfig}

\usepackage{subcaption}
\usepackage{caption}
\usepackage{pgfplots}
\pgfplotsset{compat=1.15}
\usepgfplotslibrary{fillbetween}
\usetikzlibrary{patterns,arrows,plotmarks}
\usepgfplotslibrary{groupplots}
\pgfdeclarelayer{background}
\pgfsetlayers{background,main}
\usetikzlibrary{automata,positioning}
\usetikzlibrary{decorations}
\usetikzlibrary{shapes.arrows}
\usetikzlibrary{tikzmark}
\usetikzlibrary{calc}
\usetikzlibrary{decorations.markings}
\algrenewcommand\algorithmicindent{10pt}
\usepgfplotslibrary{colorbrewer}

\usepackage{tikz}
\usepackage{graphicx}

\usepackage[utf8]{inputenc}

\newtheorem{proposition}{Proposition}

\newcommand{\mc}[1]{\mathcal{#1}}   
\DeclareMathOperator*{\argmax}{arg\,max}    
\DeclareMathOperator*{\argmin}{arg\,min}    

\newcommand{\1}{\mathbbm{1}}

\usepackage[acronym]{glossaries}

\input{acronyms}

\definecolor{okabeBlue}{RGB}{0,114,178}
\definecolor{okabeOrange}{RGB}{230,159,0}
\definecolor{okabeGreen}{RGB}{0,158,115}
\definecolor{okabeVermillion}{RGB}{213,94,0}

\begin{document}
\title{Risk-Aware Online Conformal State Probing}


\name{Pietro Talli$^{\star}$ \qquad Petar Popovski$^{\dagger}$ \qquad Osvaldo Simeone$^{\star,\dagger}$\thanks{The work of O. Simeone was supported by an Open Fellowship of the EPSRC (EP/W024101/1) and by the EPSRC (EP/X011852/1). The work of P. Talli and O. Simeone was supported by the ERC (No. 101198347).}}
  \vspace{-4 cm}
  \address{$^{\star}$Institute for Intelligent Networked Systems, Northeastern University London \\
      $^{\dagger}$Connectivity Section, Department of Electronic Systems, Aalborg University \\
      Email: $\{$p.talli, o.simeone$\}$@northeastern.edu, petarp@es.aau.dk}

\maketitle

\begin{abstract}
AI-based autonomous agents, typically hosted at data centers, must acquire state information from robots or edge devices in order to issue informed control decisions. Managing uncertainty about the state is particularly consequential in safety-critical settings, in which average-case guarantees are insufficient. In this context, we study a sequential decision maker process that jointly decides which actions to take and when to probe given access to an arbitrary state prediction model. We propose \gls{ocsp}, an action and probing policy that certifies worst-case reliability levels  without relying on distributional assumptions. \gls{ocsp} is designed to provably control the \gls{mqe}, i.e., the fraction of instances where probing would have been beneficial, while minimizing the probing rate. \gls{ocsp} can be applied to existing pre-trained value-based control policies without requiring retraining or fine-tuning. We validate \gls{ocsp} through numerical simulations to verify theoretical guarantees  and to assess performance trade-offs as a function of the calibration of the state predictor.
\end{abstract}
 
\begin{keywords}
Conformal risk control, optimal control, active data acquisition
\end{keywords}

\glsresetall

\section{Introduction}

Recent advances in AI, including \glspl{llm} and \glspl{vla} \cite{black2026real}, have equipped autonomous agents, typically hosted at data centers, with greater capabilities and generalization skills. However, the deployment of AI-based autonomous agents increasingly demands strict reliability guarantees on performance \cite{li2023behavior, rabanser2026towards,rabanser2026towards}. A central difficulty in ensuring reliable operation is that agents residing on the cloud must act on system states that are not promptly available, as telemetry from robots or edge devices can be too costly to query continuously and in a timely fashion (see Fig.~\ref{fig:model}). As a result, the agent must act under partial observability, actively deciding when to probe the state in order to reduce its uncertainty about the environment before committing to an action \cite{gunduz2023timely}. 

\begin{figure}
    \centering
\begin{tikzpicture}
    \node at (0,0) {\includegraphics[width=0.5cm]{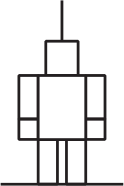}};
    \draw[dashed] (1,-0.5) ellipse (0.45cm and 0.35cm);

    \draw[dashed] (2,-1) ellipse (0.6cm and 0.5cm);

    \draw[dashed] (3,-1.5) ellipse (0.7cm and 0.6cm);

    \node at (3,0.6) {\includegraphics[width=1cm]{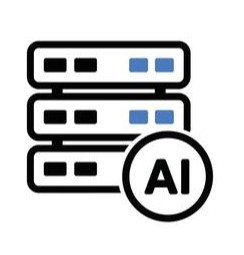}};

    \node at (3,-1.5) {\includegraphics[width=0.5cm]{files/robot.png}};


    \draw[dashed] (3.9,-1) ellipse (0.45cm and 0.35cm);
    \draw[dashed] (5,-1.2) ellipse (0.6cm and 0.5cm);
    \draw[dashed] (6,-1.5) ellipse (0.45cm and 0.35cm);
    
    \node at (5,-1.2) {\includegraphics[width=0.5cm]{files/robot.png}};

    \draw[dashed] (5,-1.2) ellipse (0.6cm and 0.5cm);

    \draw[dotted] (0,0) -- (1.2,-0.4) -- (2.1,-1.2) -- (3,-1.5) -- (3.7,-1.1) -- (5,-1.2);

    \fill[black] (0,0) circle (0.05cm);
    \fill[black] (3,-1.5) circle (0.05cm);
    \fill[black] (5,-1.2) circle (0.05cm);


    \draw[->, thick, red] (0.2,0.1) -- (2.5,0.6); 
    \draw[->] (2.5,0.5) -- (0.2,0); 

    \draw[->, thick, red] (2.95,-1.2) -- (2.95,0.2); 
    \draw[->] (3.05,0.2) -- (3.05,-1.2); 

    \draw[->, thick, red] (4.9,-1) -- (3.4,0.45); 
    \draw[->] (3.4,0.6) -- (5,-0.9); 

    \draw[->, thick, red] (4.5,0.8) -- (5,0.8);
    \draw[->] (4.5,0.48) -- (5,0.48);
    \node at (6,0.8) {\footnotesize state probing};
    \node at (6.0,0.5) {\footnotesize control actions};
    \draw[dashed] (4.75,0) ellipse (0.3cm and 0.25cm);
    \node[text width=2.4cm] at (6.4,0) {\footnotesize uncertainty for \\[-4pt] current belief};
    
\end{tikzpicture}
    \caption{Example of an agent that navigates the environment according to an estimated belief on the current state, actively probing the state to control the outage rate. \vspace{-0.5 cm}}
    \label{fig:model}
\end{figure}

 Management of uncertainty is particularly consequential for reliability-critical applications, in which standard average-case guarantees are insufficient. In fact, a policy that performs well in expectation, but occasionally fails catastrophically, may be unacceptable. Ideally, reliability constraints must hold even in the worst case, without distributional assumptions on the environment or observation noise. 

In this work, we study a general sequential decision-making problem in which an intelligent agent jointly decides which control actions to execute and when to probe the state  (Sec. \ref{sec:2}). Fig.~\ref{fig:model} illustrates the example of a robot controlled by a cloud based agent. At each time step, a state-estimation model at the agent produces an uncertainty estimate for the current state, which is used by the agent to determine control and probing actions. Reliability requirements are imposed by constraining the \emph{\gls{mqe}} rate, i.e., the fraction of steps in which probing would have been beneficial but the state was not probed \cite{kiyani2026strategic}.

To ensure provable control of the \gls{mqe}, we introduce \emph{\gls{ocsp}}, an action and probing policy that wraps around any pre-trained state predictor (Sec.~\ref{sec:3}). \gls{ocsp} builds on online risk control, which provides deterministic, distribution-free coverage and risk guarantees for sequentially observed data \cite{angelopoulos2023conformal, gibbs2024conformal, lekeufack2024conformal, zecchin2026prediction}. Unlike this line of work, however, \gls{ocsp} does not aim to calibrate a predictor \cite{shafer2008tutorial} or to defer from cloud to edge \cite{kiyani2026strategic}, but to certify and control the reliability of a closed-loop decision-making mechanism that jointly governs \textit{which actions} to take and \textit{when} to probe. As a use case, \gls{ocsp} is applied to an existing value-based control policy in a \gls{mdp},  providing worst-case performance guarantees for a pre-trained agent (Sec. \ref{sec:experiments}).

\section{System Model}\label{sec:2}

We consider a stateful sequential decision process whose state at discrete time $t$ is given by $s_t\in\mc{S}$, where $\mc{S}$ is a discrete finite set. As in the example of Fig.~\ref{fig:model}, at each time step $t$, the agent does not have direct access to state $s_t$, but it can acquire state $s_t$ through a probing action $P_t\in\{0,1\}$, where $P_t=1$ indicates that the state is probed. The sequence of past probed states is used to maintain an estimate of the current state in the form of a probability $b_t\in\Delta_{\mc{S}}$ given by
\begin{equation}
\label{eq:belief_function}
b_t = f_t(\{(t',s_{t'}): P_{t'}=1\}_{t'<t}),
\end{equation}
where $\Delta_{\mc{S}}$ is the simplex of probability distributions over the set $\mc{S}$ and $f_t$ is a sequence of functions. Functions $f_t$ may be implemented as a pre-trained predictor, and are not subject to optimization. As discussed in Sec.~\ref{sec:experiments}, the belief may be evaluated using a probabilistic transition model for the state.

At every time step $t$, the agent interacts with the environment, taking an action $a_t\in\mc{A}$ as a function of the current belief $b_t$, where $\mc{A}$ is a discrete finite set. This yields the control action $ a_t = \pi_t^{\mc{A}}(b_t),$ with \textit{action policy} $\pi^{\mc{A}}_t$ to be optimized. In general, the next state $s_{t+1}$ can be affected by the previous actions $a_{t'}$ with $t'\leq t$, and we do not make any assumption on the functional or probabilistic form of this dependence.

Upon taking action $a_t$, the agent collects the utility $U_t(s_t,a_t)\in[U_{\mathrm{min}}, U_\mathrm{max}]$, where $U_t(\cdot,\cdot)$ is an arbitrary known function, and we have bounds $U_{\mathrm{min}}\geq 0$ and $U_{\mathrm{max}}<\infty$. Thus, given the correct state $s_t$, the optimal action is 
\begin{equation}
\label{eq:optimal_action}
    a_t^* \in \argmax_{a\in\mc{A}} U_t(s_t,a), 
\end{equation}
which corresponds to the optimal utility value $U_t^*=U_t(s_t,a^*)$. As further discussed in Sec.~\ref{sec:experiments}, the utility $U_t(s,a)$ may be obtained as the state-action value function of a pre-trained value-based agent.

For all times $t$ with $P_t=1$, the state $s_t$ is known and the action can be optimally set to $a_t=a^*_t$. In contrast, at times with no probing, the action $a_t$ taken by the agent generally incurs a \textit{regret} compared to the optimal action $a^*_t$ in \eqref{eq:optimal_action}, which is given by the difference 
\begin{equation}
    R_t(s_t,\hat{a}_t) = U^*_t - U_t(s_t,\hat{a}_t). 
\end{equation}

To quantify the acceptable performance loss caused by incomplete state probing, we introduce a tolerance threshold $R_{\mathrm{max}} \leq U_\mathrm{max}-U_{\mathrm{min}}$, representing the maximum admissible regret. Violations of this threshold are captured by the binary \emph{outage} variable
\begin{equation}
\label{eq:outage_flag}
    O_t = \mathbbm{1}\!\left[R_t(s_t,\hat{a}_t) > R_{\mathrm{max}}\right].
\end{equation}
Importantly, since the agent does not observe the state unless it actively probes it, the outage signal $O_t$ is itself observed only when a probe is issued, i.e., when $P_t=1$.

We model the decision of whether to probe as a stochastic \emph{probing policy} $\pi_t^{\mc{P}}(b_t) \in [0,1]$, with $P_t \sim \mathrm{Bernoulli}(\pi_t^{\mc{P}}(b_t))$ denoting the decision to probe at time $t$. Overall, given fixed belief functions in \eqref{eq:belief_function}, we aim at designing the action policy $\pi_t^{\mc{A}}$ and the probing policy $\pi_t^{\mc{P}}$. The design goal is to control the \gls{mqe} \cite{kiyani2026strategic}, i.e., the fraction of instances in which probing would have been beneficial, since $O_t=1$, but was not performed, i.e., $P_t=0$. Formally, the \gls{mqe} is defined as
\begin{equation}
\label{eq:mqe}
    \mathrm{MQE}_T = \frac{1}{N_T} \sum_{t=1}^{T} O_t\,(1 - P_t),
\end{equation}
where $N_T = \sum_{t=1}^{T} O_t$ is the number of outages in the sequence, with the convention $\mathrm{MQE}_T = 0$ when $N_T = 0$. A zero \gls{mqe} can be obtained by a trivial policy that always probes the state, setting $P_t=1$ for all times $t$. Therefore, our design goal is to control the \gls{mqe}, while making a best effort at minimizing the probing rate $\sum_{t=1}^T P_t/T$. 

Formally, we aim to ensure that, for any sequence of states $\{s_t\}_{t=1}^T$, given any fixed belief functions $\{f_t\}_{t=1}^T$, the \gls{mqe} does not exceed a target level $\alpha\in[0,1]$ when $T$ is sufficiently large. This requirement is expressed by the inequality  
\begin{equation}
\label{eq:time_b}
    \mathrm{MQE}_T \leq \alpha + o_{N_T}(1),
\end{equation}
where the additive term $o_{N_T}(1)$ vanishes as the number of outages $N_T$ grows.

\section{Online Conformal State Probing}\label{sec:3}
In this section we present \gls{ocsp}, an online action-probing policy that provides worst-case guarantees on the \gls{mqe} as per the requirement~\eqref{eq:time_b}, while attempting to minimize the probing rate. 

\noindent \textbf{OCSP.} At each time step $t$, \gls{ocsp} first evaluates a candidate action $\hat{a}_t$ by minimizing the regret under the current belief $b_t$, i.e.,
\begin{equation}
\label{eq:candidate_a}
    \hat{a}_t = \argmin_{a\in\mc{A}} \sum_{s\in\mc{S}} b_t(s) R_t(s,a).
\end{equation}
We emphasize that, since the belief is an arbitrary estimate of the state, the action in \eqref{eq:candidate_a} does not provide any optimality guarantee. The action policy $\pi_t^{\mc{A}}$ then sets $a_t=\hat{a}_t$ when $P_t=0$, while setting $a_t=a^*_t$ when $P_t=1$.

\begin{algorithm}[t]
\caption{Online Conformal State Probing (\textsc{OCSP})}
\label{alg:ocsp}
\begin{algorithmic}[1]
\footnotesize
\Require target $\alpha\in[0,1]$; tolerance $R_{\max}\ge0$; step sizes $\eta>0$; exploration
  $\rho\in(0,1-\alpha]$; utility functions $\{U_t\}_{t=1}^T$; estimation functions $\{f_t\}_{t=1}^T$;
  initial $\mu_1$
\For{$t=1,2,\dots$}
  \State $b_t \gets f_t(\{s_{t'}:P_{t'}=1\}_{t'<t})$ \Comment{$\delta_{s_1}$ if t=1}
  \State $\hat{a}_t\gets\argmin_{a\in\mc{A}}\sum_{s\in\mc{S}}b_t(s)R_t(s,a)$
  \State $\hat{O}_t\gets \sum_{s\in\mc{S}} b_t(s)\1[R_t(s,\hat{a}_t) > R_{\mathrm{max}}]$
         \Comment{outage score}
  \State $p_t\gets \rho+(1-\rho)\,\1[\hat{O}_t\geq\mu_t]$
  \State draw $P_t\sim\mathrm{Bernoulli}(p_t)$
  \If{$P_t=0$}
     \State $a_t\gets \hat{a}_t$; \quad 
     $\mu_{t+1}\gets\mu_t$
  \Else \Comment{state probing (rule-driven or exploratory)}
     \State observe $s_t$;\quad
            compute $R_t$ and $O_t$
            \State $a_t \gets a^*_t;$
     \State update $\mu_t$ according to \eqref{eq:update_rule}
  \EndIf
\EndFor
\end{algorithmic}
\end{algorithm}

\gls{ocsp} decides whether to probe the state, setting $P_t=1$, or not, $P_t=0$, depending on a local estimate of the probability of an outage $O_t$ in \eqref{eq:outage_flag} for the candidate action $\hat{a}_t$. The outage probability is estimated as the probability of the outage event $O_t = \1 [R_t(s,\hat{a}_t) > R_{\mathrm{max}}]$ under the belief $b_t$, i.e.,  
\begin{equation}
    \hat{O}_t= \sum_{s\in\mc{S}} b_t(s)\1[R_t(s,\hat{a}_t) > R_{\mathrm{max}}]. 
\end{equation}
Specifically, \gls{ocsp} follows a probabilistic query policy $\pi_t^{\mc{P}}$ whereby probing is carried out with probability 
\begin{equation} 
\label{eq:p_t}
p_t = \pi_t^{\mc{P}}(b_t) =\rho+(1-\rho)\1[\hat{O}_t \geq \mu_t],
\end{equation}
i.e., $P_t \sim \mathrm{Bernoulli}(p_t)$,  
where $\rho\in(0,1-\alpha]$ is an exploration probability and $\mu_t$ is a threshold to be calibrated online. The rationale behind the probing probability \eqref{eq:p_t} is twofold: \textit{(i)} if the estimate $\hat{O}_t$ is larger than a well-designed threshold $\mu_t$, it is plausible, although not guaranteed, that the candidate action $\hat{a}_t$ would lead to an outage event; and \textit{(ii)} in order to ensure that the true outage variable $O_t$ is available at the agent sufficiently often, the agent queries the state with probability no smaller than the exploration probability $\rho$.

The threshold in \eqref{eq:p_t} is potentially updated whenever the agent decides to probe the state, i.e., all times $t$ with $P_t=1$. Based on state $s_t$ and outage indicator $O_t$ in \eqref{eq:outage_flag}, the update of the threshold $\mu_t$ follows the rule 
\begin{equation}
\label{eq:update_rule}
    \mu_{t+1}\gets \mu_t-\eta\,\tfrac{O_tP_t}{p_t} \big[(1-p_t)\,-\alpha\big],
\end{equation}
where $\eta>0$ is an update rate and $\mu_1\in[0,1]$. By \eqref{eq:update_rule}, when $P_t=0$ (no query) or $O_t=0$ (no outage), the threshold is not modified and we recover $\mu_{t+1}\gets\mu_t$.
When $P_t=1$ and $O_t=1$, i.e., in the case of a query and of an outage event, the threshold $\mu_t$ is modified as follows:

\noindent $\bullet$ If $\hat{O}_t<\mu_t$, the query is prompted by exploration, since the probing probability in \eqref{eq:p_t} is $p_t=\rho$. This implies that the estimation $\hat{O}_t$ failed to predict the outage event $O_t=1$ under the current threshold $\mu_t$. Accordingly, the threshold is decreased by the amount $\eta(1-\rho -\alpha)/\rho >0$ to facilitate the prediction of future outage events. Recall the design constraint $\rho<1-\alpha$. The intuition behind the magnitude of the update is that, since exploration already inherently identifies a fraction $\rho$ of the outages, we wish the threshold mechanism to identify the remaining fraction $1-\rho$. The scale factor $1/\rho$ plays the role of an inverse probability weighting \cite{candes2023conformalized}, compensating the fact that $O_t$ is selectively observed.

\noindent $\bullet$ If $\hat{O}_t \geq \mu_t$, the query is caused by a correct thresholding of the outage probability estimate $\hat{O}_t$. In this case, the threshold is increased by $\eta\alpha$ with the goal of reducing the probing rate. 

\noindent \textbf{Missed  Query Error Guarantees.} We now show that \gls{ocsp} guarantees condition \eqref{eq:time_b} on the \gls{mqe}. 
\begin{proposition}[Performance guarantees]
\label{prop:mqe-bound}
For any $\delta\in(0,1)$, any state sequence $\{s_t\}_{t=1}^T$, any estimation functions $\{f_t\}_{t=1}^T$ and utility functions $\{U_t\in [U_{\mathrm{min}}, U_{\mathrm{max}}]\}_{t=1}^T$, with probability at least $1-\delta$ over the algorithm's exploration mechanism in \eqref{eq:p_t}, \gls{ocsp} satisfies the upper bound
\begin{equation}
 \mathrm{MQE}_T \;\le\; \alpha+\Delta(N_T,\delta)
  \label{eq:mqe-bound}
\end{equation}
where, for $N_T\ge1$, we have defined
\[
  \Delta(N_T,\delta)=
  \frac{1 + 2\eta/\rho}{\eta N_T} + \sqrt{\frac{2(1-\rho)}{\rho N_T}\log{ \frac{2}{\delta} }} + \frac{2(1-\rho)}{3\rho N_T}\log{\frac{2}{\delta}},
\]
and $\Delta(0,\delta)=0$.
\end{proposition}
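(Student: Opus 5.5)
Following the standard online conformal analysis, we combine a deterministic telescoping bound on the threshold $\mu_t$ with a martingale concentration step that removes the inverse-probability weighting.

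\textbf{Step 1 (boundedness of $\mu_t$).} We first show that the iterates satisfy $\mu_t\in[-\eta(1-\rho-\alpha)/\rho,\,1+\eta\alpha]$ for all $t$. The argument is by induction. The scores satisfy $\hat{O}_t\in[0,1]$. If $\mu_t>1$, then $\hat{O}_t<\mu_t$, so $p_t=\rho$ and any update can only decrease $\mu_t$. If $\mu_t\le 0$, then $\hat{O}_t\ge\mu_t$, so $p_t=1$ and the only possible update is an increase by $\eta\alpha$. Since each single step has magnitude at most $\eta/\rho$, the iterates stay in an interval of length at most $1+2\eta/\rho$. This is the source of the first term in $\Delta$.

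\textbf{Step 2 (telescoping).} Summing the update rule \eqref{eq:update_rule} over $t=1,\dots,T$ gives
\[
\sum_{t=1}^T \frac{O_tP_t}{p_t}\big[(1-p_t)-\alpha\big]=\frac{\mu_1-\mu_{T+1}}{\eta}\le \frac{1+2\eta/\rho}{\eta}.
\]
Now $1-p_t=(1-\rho)\1[\hat{O}_t<\mu_t]$, and on that event $p_t=\rho$. Hence the left-hand side equals $Z_T-\alpha\sum_t O_tP_t/p_t$, where
\[
Z_T=\sum_t O_t\frac{P_t}{\rho}(1-\rho)\1[\hat{O}_t<\mu_t].
\]
Next we introduce the martingale differences $X_t=O_t(1-p_t)\big(P_t/p_t-1\big)$ and $Y_t=O_t\big(P_t/p_t-1\big)$, defined with respect to the filtration generated by past probing draws. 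Conditioned on the past, $b_t$, $\hat a_t$, $\mu_t$ and $p_t$ are all determined, and $O_t$ is fixed because the state sequence is arbitrary but fixed. The quantity we want to control is the missed-query count $\sum_t O_t(1-P_t)$. Its conditional expectation is $\sum_t O_t(1-p_t)$, and $Z_T=\sum_t O_t(1-p_t)+\sum_t X_t$.

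\textbf{Step 3 (concentration).} Rather than handling the two sums separately, it seems cleanest to work with the single combination
\[
\sum_t O_t(1-P_t)-\alpha N_T=\Big[Z_T-\alpha\sum_t \tfrac{O_tP_t}{p_t}\Big]+\alpha\sum_t Y_t-\sum_t X_t+\Big(\sum_t O_t(1-P_t)-\sum_t O_t(1-p_t)\Big).
\]
This reduces the task to bounding one martingale $M_T=\sum_t D_t$ with
\[
D_t=O_t\Big[(1-P_t)-(1-p_t)\frac{P_t}{p_t}\Big]+\alpha O_t\Big(\frac{P_t}{p_t}-1\Big).
\]
When $p_t=1$ we have $D_t=0$. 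When $p_t=\rho$ the increment is bounded by $(1-\rho)/\rho$ up to constants, and its conditional variance is at most $O_t(1-\rho)/\rho$. We then apply Freedman's (Bernstein's) inequality for martingales, using the variance bound $V\le N_T(1-\rho)/\rho$. This gives, with probability at least $1-\delta$ (the factor $2$ in $\log(2/\delta)$ accounts for the two-sided bound or a union over the two pieces),
\[
M_T\le\sqrt{\tfrac{2(1-\rho)N_T}{\rho}\log\tfrac2\delta}+\tfrac{2(1-\rho)}{3\rho}\log\tfrac2\delta .
\]
Dividing by $N_T$ and combining with Step 2 yields \eqref{eq:mqe-bound}. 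The case $N_T=0$ holds trivially.

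\textbf{Main obstacle.} The delicate point is that $N_T$ is not predictable: the $O_t$ depend on states driven by past actions, which themselves depend on past $P_t$. Freedman's inequality with a random variance proxy therefore needs care. We would handle this by noting that, for a fixed state sequence, $O_t$ is $\mathcal{F}_{t-1}$-measurable, so $V_T$ is a sum of predictable terms. We would then use a peeling argument or the uniform form of Freedman's bound, possibly absorbing the resulting logarithmic factors into the stated constants. The constants, in particular the range $(1-\rho)/\rho$ and the exact splitting of $\delta$, would need to be checked carefully against the stated form of $\Delta$.
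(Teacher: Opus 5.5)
Your proposal follows the paper's proof and is as rigorous as it. Steps 1--2 supply the deterministic control of $\sum_t e_t$, with $e_t=\frac{O_tP_t}{p_t}[(1-p_t)-\alpha]=(\mu_t-\mu_{t+1})/\eta$, which the paper omits. Your increment simplifies to $D_t=(1-\alpha)O_t(1-P_t/p_t)=(1-\alpha)O_t(p_t-P_t)/p_t$. That is exactly the paper's second sum in \eqref{eq:two_terms}, so the detour through $X_t,Y_t,Z_T$ is unnecessary, and your Freedman step with variance proxy $N_T(1-\rho)/\rho$ is the paper's.

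Your caveat that $N_T$ is random (the $O_t$ are predictable but depend on past probes) is real. The paper glosses over it too, applying Freedman as if $N_T$ were fixed. Peeling or a time-uniform bound would add an $N_T$-dependent logarithmic term that cannot be absorbed into the fixed constants of $\Delta(N_T,\delta)$.
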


\begin{proof}
We start by rewriting \eqref{eq:mqe} as
\begin{equation}
\label{eq:two_terms}
\mathrm{MQE}_T =  \alpha + \frac{1}{N_T} \Bigg[\sum_{t=1}^T e_t + \sum_{t=1}^T O_t\frac{1-\alpha}{p_t} (p_t-P_t)\Bigg].
\end{equation}
The first sum in \eqref{eq:two_terms} is deterministically controlled by the update \eqref{eq:update_rule} in a manner similar to online conformal prediction~\cite{gibbs2024conformal}, and we omit details here due to space constraints,  while the second sum can be upper bounded with high probability with respect to the sequence of probing decisions $\{P_t\}$, concluding the proof. 


For the second sum, writing $P^{t-1}=[P_1,...,P_{t-1}]$ and $\mc{T}^{\mc{O}}=\{t:O_{t}=1\}_{t\leq T}$, we have $\mathbb{E}[P_t \mid P^{t-1}]=p_t$, and thus the sum $\sum_{t\in\mc{T}^{\mc{O}}}(p_t-P_t)/p_t$ is a martingale difference sequence. Moreover, we have $\mathrm{Var}((p_t-P_t)/p_t\mid P^{t-1}) = (1-p_t)/p_t \leq (1-\rho)/\rho$. Therefore, using Freedman's inequality \cite{freedman1975tail} on the sequence $\{(p_t-P_t)/p_t\}$, we obtain the inequality \begin{equation}
\label{eq:martingale}
    \left| \sum_{t\in\mc{T}^{\mc{O}}} \frac{p_t-P_t}{p_t} \right| \leq \sqrt{2\frac{(1-\rho)N_T}{\rho}\log\frac{2}{\delta}} + \frac{2(1-\rho)}{3\rho} \log\ \frac{2}{\delta}
\end{equation}
with probability no smaller than $1-\delta$. 


\end{proof}

\section{Experiments}
\label{sec:experiments}
\begin{figure*}
    \centering
    \subfloat{\input{figures/legend_time_plot}} \\ 
    \vspace{-0.3cm}    \setcounter{subfigure}{0}
    \addtocounter{figure}{1}
    \refstepcounter{subfigure}\label{fig:bound}\subfloat{\input{figures/all_mqe}}
    \setcounter{subfigure}{1}\refstepcounter{subfigure}\label{fig:query_rate}\subfloat{\input{figures/all_query_rate}}
    \setcounter{subfigure}{2}\refstepcounter{subfigure}\label{fig:reward}\subfloat{\input{figures/all_reward}}
    \vspace{-0.2cm}
    \setcounter{figure}{1}
    \caption{MQE, average query rate and average reward over time $t$ for $10^5$ steps. Results are averaged over 100 different seeds.}
    \vspace{-0.5cm} 
    
    \label{fig:allplots}
\end{figure*}

We validate \gls{ocsp} and its theoretical guarantees by investigating the \gls{mqe}, utility and query rate of \gls{ocsp} under different belief models, while comparing with the ideal benchmark in \cite{talli2025pragmatic}, which requires state augmentation and retraining. 

\noindent \textbf{Setting.} Even though our framework applies to generic, even deterministic and adversarial, dynamic systems, we consider here an \gls{mdp}, since this allows us comparisons with established benchmarks. The considered  \gls{mdp} has   $|\mc{S}|=30$ states, $|\mc{A}|=4$ actions, and a transition model $M$ obtained as follows. For each pair $(s,a)$ we randomly select a target state $s_a^*\in\mc{S}$ and define the transition probability as $M(s,a,s') \propto \exp(-d(s',s_a^*)^2/\sigma^2)$, 
where $d(s,s^*) = \min (|s-s^*|, \mc{S} - |s-s^*|)$ is a circular distance between discrete states and $\sigma^2$ = 0.67. The state sequence is drawn from this model given the control actions. The utility is derived from a reward given by the distance from a randomly selected target state $s^*$ as $r(s,a,s')=10\exp (-0.5\cdot d(s',s^*))$. Specifically, the utility $U_t(s,a)$ is obtained from a pre-trained state-action value function $Q(s,a)$ optimized by running Q-learning \cite{jaakkola1993convergence} using the implementation provided in \cite{chades2014mdptoolbox} over the given \gls{mdp}.

\noindent \textbf{Implementation.} When $P_{t-1}=0$, the belief $b_t$ is obtained from an estimated transition model $\tilde M$ by updating the last belief $b_{t-1}$ given action $a_{t-1}$ as  $f_t(b_{t-1},a_{t-1}) = b_{t-1}^\intercal \tilde M_{a_{t-1}},$ 
where $\tilde M_{a_{t-1}}$ is the $|\mc{S}|\times|\mc{S}|$ transition matrix associated with action $a_{t-1}$. The estimated transition model $\tilde M$ is defined using the same moder as for $M$, but with a generally different dispersion parameter $\sigma^2$. Specifically we considered three settings corresponding to predictive models with different calibration levels \cite{guo2017calibration, simeone2022machine, huang2025distilling}: \textit{(i)} \textit{exact belief}, i.e., $\sigma^2=0.67$; \textit{(ii)} \textit{overconfident belief} with a lower spread of $\sigma^2=0.37$; and \textit{(iii)} \textit{underconfident belief} with the larger spread $\sigma^2=2$. 

\noindent \textbf{Baseline.} As a baseline, we consider the ``augment-and-train'' methodology in \cite{talli2025pragmatic}, which trains from scratch in the \gls{mdp} a policy that simultaneously learns how to act and probe. This policy operates on an augmented state space, describing also the time elapsed from the last probed state, and it is trained to meet an average probing rate. Augment-and-train serves as an upper bound on the performance of \gls{ocsp} for a given probing rate, since \gls{ocsp} operates as a wrapper around a pre-trained value function.

In Fig.~\ref{fig:allplots} we plot $\mathrm{MQE}_T$, the average probing rate and the average normalized cumulative return, as a function of time $t$. Fig.~\ref{fig:bound} demonstrates that \gls{ocsp} can control the \gls{mqe} to meet the target rate $\alpha=0.3$, irrespective of the quality of the prediction without requiring training. In contrast, the augment-and-train scheme, which is trained for the given target probing rate, which is set to approximately match that of \gls{ocsp} with an exact belief (see Fig.~\ref{fig:query_rate}), yields an \gls{mqe} as high as 0.4. The figure also reports the bound from Proposition~\ref{prop:mqe-bound} with $\delta=0.05$ to further validate the theoretical guarantees of \gls{ocsp}. 

In Fig.~\ref{fig:query_rate} we observe that \gls{ocsp} tends to concentrate state probings in the first time steps, while stabilizing to a lower probing rate as time goes on. Specifically, an underconfident belief induces candidate actions (\ref{eq:candidate_a}) that cater to a larger set of possible states, thus requiring \gls{ocsp} to issue fewer probing actions to compensate for regret events. In contrast, with an overconfident belief, the candidate actions  (\ref{eq:candidate_a})  may make  riskier choices tailored to a particular state, and \gls{ocsp} compensates for resulting outage events  by increasing the probing rate. 

The different probing rates translate into different cumulative returns, as shown in Fig.~\ref{fig:reward}. In particular, Fig.~\ref{fig:reward} shows that \gls{ocsp} with an overconfident belief obtains a slightly higher average return than \gls{ocsp} with an exact belief but with higher average probing rate. 

\begin{figure}
    \centering
    \input{figures/pareto_plot/pareto_plot}
    \caption{Cumulative reward vs. probing rate for different values of $\alpha$.}
    \vspace{-0.5cm}
    \label{fig:pareto}
\end{figure}

The trade-off between probing rate and cumulative return is further investigated in Fig.~\ref{fig:pareto}, which shows the Pareto front on the plane with axes given by cumulative return and probing rate, both evaluated at time $t=10^5$. The Pareto front is obtained for \gls{ocsp} by varying the \gls{mqe} target $\alpha$, while for augment-and-train we varied the target probing rate. \gls{ocsp}, at lower query rates, is seen to almost match the performance of the ideal augment-and-train scheme. Using an exact belief is observed to be instrumental for achieving a better performance trade-off with \gls{ocsp}. Moreover, an overconfident belief model is preferable to an underconfident one at low query rates, while, when the query rate is higher than 0.7, the underconfident belief obtains a higher trade-off. 

\section{Conclusions}
In this work, we developed a sequential decision-making framework with adaptive state probing and formal performance guarantees. Future work may focus on  larger-scale evaluations and on applications to AI-based policy models.

\newpage

\bibliographystyle{IEEEbib}
\bibliography{mybiblio}

\end{document}

%% file: acronyms.tex
\newacronym{mdp}{MDP}{Markov decision process}
\newacronym{rl}{RL}{Reinforcement Learning}
\newacronym{mqe}{MQE}{missed query error}
\newacronym{mpi}{MPI}{Modified Policy Iteration}
\newacronym{ocsp}{OCSP}{online conformal state probing}
\newacronym{ai}{AI}{Artificial Intelligence}
\newacronym{llm}{LLM}{large language model}
\newacronym{vla}{VLA}{vision language action model}

%% file: figures/legend_time_plot.tex
\begin{tikzpicture}

\definecolor{darkgray176}{RGB}{176,176,176}

\begin{axis}[
width=0cm,
height=0.1cm,
scale only axis,
tick align=inside,
xmin=0, xmax=0,
ymin=0, ymax=0.1,
legend style={
    draw=white!15!black,
    at={(0, 0)}, anchor=south,
    /tikz/every even column/.append style={column sep=0.2em}
},
legend columns=6
]

\addplot[red, thick] table {
0    0.1
};

\addplot[okabeBlue, , thick] table {
0    0.1
};

\addplot[okabeOrange, thick] table {
0    0.1
};

\addplot[okabeGreen, thick] table {
0    0.1
};

\addplot[okabeVermillion, thick, dashed] table {
0    0.1
};

\legend{\footnotesize Augment-and-train, \footnotesize OCSP w/ underconf. belief, \footnotesize OCSP w/ exact belief, \footnotesize OCSP w/ overconf. belief, \footnotesize Upper bound ($\delta=0.05$)}
\end{axis}
\end{tikzpicture}

%% file: figures/all_mqe.tex
\begin{tikzpicture}
            \begin{axis}[
                width=0.33\textwidth,
                height=3cm,
                grid=major,
                xlabel={time $t$}, 
                ylabel={MQE},
                ymin=0.2, ymax=0.5,
                xtick={0,50000,100000},
                scaled ticks=false,
                xticklabels = {0,$5\times10^4$, $10^5$},
                label style={font=\footnotesize},       
                tick label style={font=\footnotesize}, 
            ]
                \addplot[okabeBlue, thick] table [x=time_step, y=mean_mqe, col sep=comma] {figures/time_plot/0.5.csv};
                \addplot[okabeOrange, thick] table [x=time_step, y=mean_mqe, col sep=comma] {figures/time_plot/1.5.csv};
                \addplot[okabeGreen, thick] table [x=time_step, y=mean_mqe, col sep=comma] {figures/time_plot/2.7.csv};
                
                \addplot[okabeVermillion, thick, dashed] table [x=time_step, y=bound, col sep=comma]{figures/time_plot/0.5.csv};

                \addplot[red] table [x=time_step, y=mqe_comparison, col sep=comma]{figures/time_plot/0.5.csv};
            \end{axis}

            \node[  
                   ] at (4.1,1.2) {\footnotesize (a)};
        \end{tikzpicture}

%% file: figures/all_query_rate.tex
\begin{tikzpicture}
            \begin{axis}[
                width=0.33\textwidth,
                height=3cm,
                grid=major,
                xlabel={time $t$},
                ylabel={Probing Rate},
                ymin=0.5, ymax=0.7,
                xtick={0,50000,100000},
                scaled ticks=false,
                xticklabels = {0,$5\times10^4$, $10^5$},
                label style={font=\footnotesize},       
                tick label style={font=\footnotesize}, 
            ]
                \addplot[okabeBlue, thick] table [x=time_step, y=mean_query_rate, col sep=comma] {figures/time_plot/0.5.csv};
                \addplot[okabeOrange, thick] table [x=time_step, y=mean_query_rate, col sep=comma] {figures/time_plot/1.5.csv};
                \addplot[okabeGreen, thick] table [x=time_step, y=mean_query_rate, col sep=comma] {figures/time_plot/2.7.csv};
                
                \addplot[red, thick] table [x=time_step, y=query_rate_comparison, col sep=comma]{figures/time_plot/0.5.csv};
            \end{axis}

            \node[  
                   ] at (4.1,1.2) {\footnotesize (b)};
        \end{tikzpicture}

%% file: figures/all_reward.tex
\begin{tikzpicture}
            \begin{axis}[
                width=0.33\textwidth,
                height=3cm,
                grid=major,
                xlabel={time $t$},
                ylabel={Return},
                ymin=2.5, ymax=2.9,
                xtick={0,50000,100000},
                scaled ticks=false,
                xticklabels = {0,$5\times10^4$, $10^5$},
                label style={font=\footnotesize},       
                tick label style={font=\footnotesize}, 
            ]
                \addplot[okabeBlue, thick] table [x=time_step, y=mean_rewards, col sep=comma] {figures/time_plot/0.5.csv};
                \addplot[okabeOrange, thick] table [x=time_step, y=mean_rewards, col sep=comma] {figures/time_plot/1.5.csv};
                \addplot[okabeGreen, thick] table [x=time_step, y=mean_rewards, col sep=comma] {figures/time_plot/2.7.csv};
                
                \addplot[red, thick] table [x=time_step, y=reward_comparison, col sep=comma]{figures/time_plot/0.5.csv};
            \end{axis}

            \node[  
                   ] at (4.1,1.2) {\footnotesize (c)};
        \end{tikzpicture}

%% file: figures/pareto_plot/pareto_plot.tex
\begin{tikzpicture}
            \begin{axis}[
                width=\linewidth,
                height=5cm,
                grid=major,
                xlabel={Probing Rate}, 
                ylabel={Cumulative Return},
                legend pos=north west,
                xmin=0.2, xmax=0.8,
                ymin=200.0, ymax=300.0,
                ytick={200,220,240,260,280,300},
                yticklabels={2.0,2.2,2.4,2.6,2.8,3.0},
                label style={font=\footnotesize},       
                tick label style={font=\footnotesize}, 
            ]

                \addplot[red, thick, dashed] table [x=query_rate, y=reward, col sep=comma] {figures/pareto_plot/noisy_plot/curve_noise1.5_optimal.csv};
            
                \addplot[okabeBlue, semithick, mark=square] table [x=query_rate, y=reward, col sep=comma] {figures/pareto_plot/noisy_plot/curve_noise0.5_alpha0.0_beta0.0.csv};
                \addplot[okabeOrange, semithick, mark=triangle] table [x=query_rate, y=reward, col sep=comma] {figures/pareto_plot/noisy_plot/curve_noise1.5_alpha0.0_beta0.0.csv};
                \addplot[okabeGreen, semithick, mark=o] table [x=query_rate, y=reward, col sep=comma] {figures/pareto_plot/noisy_plot/curve_noise2.7_alpha0.0_beta0.0.csv};

                \addplot[okabeVermillion] table {
                0.3 192.108
                0.4 214.819
                0.5 234.149
                0.6 250.86
                0.7 265.14
                0.8 278.150
                };

                \draw[rotate around={18:(0.735,279)}, dashed] (0.735,279) ellipse (0.7cm and 0.2cm);
                \node at (0.735,268) {\footnotesize $\alpha=0.2$};

                \draw[rotate around={18:(0.577,262)}, dashed] (0.577,262) ellipse (0.8cm and 0.25cm);
                \node at (0.62,252) {\footnotesize $\alpha=0.3$};

                \draw[rotate around={28:(0.456,244)}, dashed] (0.456,244) ellipse (0.75cm and 0.25cm);
                \node at (0.5,235) {\footnotesize $\alpha=0.4$};

                \draw[rotate around={45:(0.355,225)}, dashed] (0.355,225) ellipse (0.5cm and 0.4cm);
                \node at (0.42,215) {\footnotesize $\alpha=0.5$};

             \end{axis}

             \node[
                   ] at (3,3.1) {\footnotesize Augment-and-train};
            \node[
                   ] at (1.4,2.1) {\footnotesize OCSP w/ exact belief};
            \node[
                   ] at (2,2.6) {\footnotesize OCSP w/ overconf. belief};
            \node[
                   ] at (5,0.3) {\footnotesize OCSP w/ underconf. belief};
            \node[
                   ] at (5,0.8) {\footnotesize random};

            \draw[->] (4,2.50) -- (3.8,2.85);
            \draw[->] (3.4,1.7) -- (3.1,2.4);
            \draw[->] (2.3,1.35) -- (2.1,1.9);
            \draw[->] (1.52,0.32) -- (3.3,0.32);
            \draw[->] (2.88,0.8) -- (4.44,0.8);
             
        \end{tikzpicture}